\newcommand{\RR}{\mathbf{R}}
\renewcommand{\phi}{\varphi}
\newtheorem{theorem}{Theorem}
\newtheorem{proposition}{Proposition}
\newtheorem{lemma}{Lemma}
\newtheorem{definition}{Definition}
\newcommand{\cardR}{{|\setR|}}
\renewcommand{\geq}{\geqslant}
\renewcommand{\leq}{\leqslant}
\newcommand{\anti}[1]{\overline{#1}}
\newcommand{\argmin}{\mathrm{argmin}}
\newcommand{\setR}{{R}}
\newcommand{\setJ}{J}
\newcommand{\setP}{P}
\newcommand{\setC}{C}
\renewcommand{\c}{\bm{c}}
\newcommand{\y}{\bm{y}}
\newcommand{\z}{\bm{z}}
\newcommand{\p}{\bm{p}}
\newcommand{\x}{\bm{x}}
\newcommand{\A}{\mathbf{A}}
\newcommand{\w}{\bm{w}}
\newcommand{\m}{\bm{m}}
\newcommand{\dd}{\bm{d}}
\renewcommand{\b}{\bm{b}}
\newcommand{\0}{\bm{0}}
\renewcommand{\citet}[1]{[\citeauthor{#1}]}
\title{Lower Bounds for the Fair Resource Allocation Problem}
\author{
%
%
\alignauthor
Zaid Allybokus$^{*\dag}$, Konstantin Avrachenkov$^*$, Jérémie Leguay$^\dag$ and Lorenzo Maggi$^\dag$\\
       \affaddr{$^*$ INRIA Sophia Antipolis}\\
       \affaddr{$^\dag$ Huawei Technologies, France Research Center}\\
       \email{\{zaid.allybokus, jeremie.leguay, lorenzo.maggi \}@huawei.com konstantin.avratchenkov@inria.fr}
}
\begin{document}
\conferenceinfo{IFIP WG 7.3 Performance 2017.}{Nov.\ 14-16, 2017, New 
 York, NY USA}

\maketitle
\abstract{The $\alpha$-fair resource allocation problem has received remarkable attention and has been studied in numerous application fields. Several algorithms have been proposed in the context of $\alpha$-fair resource sharing to distributively compute its value. 
However, little work has been done on its structural properties. In this work, we present a lower bound for the optimal solution of the weighted $\alpha$-fair resource allocation problem and compare it with existing propositions in the literature. Our derivations rely on a localization property verified by optimization problems with separable objective that permit one to better exploit their local structures. We give a local version of the well-known midpoint domination axiom used to axiomatically build the Nash Bargaining Solution (or proportionally fair resource allocation problem).
Moreover, we show how our lower bound can improve the performances of a distributed algorithm based on the Alternating Directions Method of Multipliers (ADMM). The evaluation of the algorithm shows that our lower bound can considerably reduce its convergence time up to two orders of magnitude compared to when the bound is not used at all or is simply looser.}

\keywords{Weighted $\alpha$-fairness; Resource allocation; Network utility maximization; Proportional fairness; Max-min fairness; Alternating Directions Method of Multipliers.}

\section{Introduction}

The $\alpha$-fair resource sharing model, first studied in \citet{mo2000fair}, has already been investigated in numerous application domains, as well as its weighted variants. The weighted $(\w,\alpha)$-fair resource allocation problem is to find a vector $\x^* \in \RR_+^n$ such that 1) the utility 
\[ f^\alpha(\w,\x) =
\left\{ \begin{array}{ll}
\sum_{i=1}^n w_i\frac{x_i^{1-\alpha}}{1-\alpha}, & \alpha \neq 1,\\
 \sum_{i = 1}^n w_i\log(x_i), & \alpha=1\end{array}\right.
 \] is maximized at $\x = \x^*$, and 2) $\x^*$ lies in a feasible set defined by linear constraints of the form $\A\x \leq \c$ where $\c \in \RR_+^p$ is a capacity vector for a number $p$ of resources and $\A$ is the binary user-resource incidence $(p,n)$-constraint matrix, for a number $n$ of users, weighted by a positive vector $\w \in \RR_+^n$. 
The family of $(\w,\alpha)$-fair metrics is general and includes popular fairness concepts such as \emph{max-throughput} ($\alpha = 0$), \emph{proportional fairness}, also called \emph{Nash Bargaining Solution} ($\alpha = 1$), \emph{min-delay} ($\alpha = 2$) and arbitrarily close approximations of \emph{max-min fairness}  ($\alpha\to \infty$).
   
In this paper, we study the general weighted $(\w,\alpha)$-fair resource allocation problem under linear constraints and we propose a novel lower bound on its optimal solution. A lower bound is a positive vector $\bm{d} \in \RR_{+}^n$ respecting feasibility (that is, $\A\bm{d}\leq \c$) and such that $\x^* \geq \bm{d}$. Finding a lower bound in the context of fair resource sharing is of great interest -- it permits one to automatically define a minimal share that is attributable to each resource user as initialization of any exact computation that could take time, and may be helpful in the phase of design of a system. 
We seek to derive \emph{user-centric} formulas in the sense that their value for a specific user would depend only on the resources within a localized problem (and not on the global topology) and only on the users \emph{that compete over the same resources}. We then evaluate the formulas under different instance regimes and compare them to the literature in order to appreciate the improvements they provide.  

Remarkably, we also show how our lower bound can enhance the performance of a distributed algorithm based on the Alternating Directions Method of Multipliers (ADMM) (see  \citet{boyd2011distributed}) that can be invoked to solve optimally the $\alpha$-fair resource allocation problem. The ADMM is well-known for its fast convergence properties to modest accuracy; however, its performance is highly conditioned by the initialization of its so-called \emph{penalty parameter} that can, when badly tuned, induce an extremely poor convergence rate. Thus, tuning correctly the penalty parameter is a task that one should not neglect when using the ADMM. In light of recent studies (in particular, we exploit the results proven in \citet{deng2016global}), we demonstrate how our lower bound permits one to accomplish this task for our particular problem.
 
A well known lower bound of the proportionally fair ($\alpha = 1$) resource allocation was brought in as a building block of the axiomatization of the Nash Bargaining Solution and is commonly referred to as the \emph{midpoint domination} axiom \citet{deClippel2007}. It states that each user $i$ is given at least a fraction $\frac{w_i}{\sum_{j = 1}^n{w_j}}$ of their \emph{dictatorial allocation}, that is, the resource they would receive if the other users accept to receive $0$. We refer to the bound given by the midpoint domination axiom as the \emph{midpoint allocation}.
One can imagine that the midpoint allocation becomes arbitrarily poor as the total number of users becomes large, and its utility as a first estimation of the optimum allocation, negligible. Indeed, the formula includes the weights of the whole set of users and is independent of the problem's local structure. Similarly, the general lower bound found in \citet{marasevic2015fast} may suffer from these dependencies. 

Concerning proportional fairness ($\alpha = 1$), we give a more precise midpoint domination axiom, and provide a lower bound that we call \emph{local midpoint}.
Our lower bound on the proportionally fair allocation can be interpreted as a particular case of the midpoint domination axiom to locality -- now, each user $i$ is proportionally fairly attributable at least a fraction $\frac{w_i}{\sum_{j \in S_i}{w_j}}$ of their dictatorial allocation, where $S_i$ is not the total set of users, but \emph{the set of users in competition with the user $i$ for some resource}. Few works attempted at providing lower bounds for the general $(\w,\alpha)$-fair resource allocation. In fact, the most recent available bound is shown by \citet{marasevic2015fast}, and used by the authors for an initialization of their $\alpha$-fair heuristic. To the best of our knowledge, this is the best bound that could be found in the literature for the $\alpha$-fair resource allocation problem and we refer to it as the State-of-the-Art (SoA). 


The remainder of the paper is organized as follows: Section~\ref{sec:definition} is dedicated to the model definition and problem statement. Our lower bound presentation is addressed in Section~\ref{sec:lowerbound}. In Section~\ref{sec:admm}, we broadly remind the key features of the ADMM-based $\alpha$-fair distributed algorithm used for our illustration. The performance of the latter is shown in Section~\ref{sec:execution} and finally, Section~\ref{sec:conclusion} concludes the paper.

\section{Model Definition}
\label{sec:definition}


Let us start by formalizing the weighted $\alpha$-fair resource allocation problem. In this work, we adopt the terminology of rate control in fixed communication networks. Thus, a resource will be referred to as a \emph{link} and a user will be called a \emph{connection request} (or shortly, \emph{request}) from a source node to a destination node over a \emph{route} formed of several links. 

Let $\setJ$ be the set of network links, each link $j$ having a capacity $c_j\in \RR_+$.
Let $\setR$ be the set of  requests. Each request $r$ has a predefined route that identifies with a subset $\setJ_r \subset \setJ$ of links of the network. In turn, for each link $j\in \setJ$, $\setR_j \coloneqq \{r\in \setR; j\in \setJ_r\}$ is the set of all requests having a route that contains the link $j$. We define the link-route incidence $|\setJ|\times |\setR|$-matrix $\A$ as:

\[\A_{jr} =\left\{\begin{array}{lr} 1 & \mbox{ if } j\in \setJ_r \\ 0 & \mbox{ otherwise}\end{array} \right. \]
For each  request $r$, $x_r$ denotes the bandwidth allocated to $r$ along its route $\setJ_r$. We say that an allocation $\x = (x_r)_{r\in \setR}$ belongs to the feasibility set $\setC$ (or is \emph{feasible}) if it satisfies the capacity constraint~\eqref{eq:capacity} below:
\begin{equation}
\x \in \setC \Leftrightarrow\A\x \leq \c, \ \x\geq 0
\label{eq:capacity}
\end{equation}
where $\c = (c_j)_{j\in \setJ}$.
Each request $r$ is associated with a weight $w_r \in \RR_+$. The weight vector $ \w =(w_r)_{r\in \setR}$  accounts for a degree of relative importance of each request that can be defined at the discretion of the network. 
Weighted $\alpha$-fairness is formalized as in Definition~\ref{defalphafair} below.

\begin{definition}[$(\w,\alpha)$-fairness]
\label{defalphafair}
Let $ \setC\subset \RR_+^\cardR$ be a feasibility set defined as in \eqref{eq:capacity}, being a strict superset of $\{0\}$.
Let $\w \in \RR_+^\cardR$ and $\x^*\in \setC$. We say that $\x^*$ is $(\w,\alpha)$-\emph{fair} (or simply $\alpha$-fair when there is no confusion on $\w$) if the following holds:
\[\forall r\in \setR,\quad x^*_r >0\quad\mbox{and}\quad\forall \x\in \setC 
,\quad \sum_{r \in \setR}w_r\frac{x_r - x^*_r}{x^{*\alpha}_r} \leq 0. \]
Equivalently, $\x^*$ is $(\w,\alpha)$-fair if, and only if $\x^*$ maximizes the $\alpha$-fair utility function $f^\alpha$ defined over $\setC-\{\0\}$:
\begin{equation} \label{eq:centralProblem}
\max_{\x \in C} f^{\alpha}(\w,\x) = \sum_{r \in \setR} f^\alpha_r(w_r,x_r), \tag{$\setP_\alpha$}
\end{equation}
where
\[
f_r^\alpha(w_r,x_r) =
\left\{ \begin{array}{ll}
w_r \frac{x_r^{1-\alpha}}{1-\alpha}, & \alpha \neq 1,\\
w_r \log(x_r), & \alpha=1.
\end{array}\right.
\]

\end{definition}

%
%
%
  

 \section{Alpha-fairness --  a lower bound}
\label{sec:lowerbound}

In this section, we derive an explicit lower bound on the general $(\w,\alpha)$-fair resource allocation problem. Our lower bound only depends on the weight vector $\w$, the capacity vector $\c$ and the link-route incidence matrix $\A$. Moreover, the bound exploits the local structure of the problem, which prevents it from deteriorating systematically with the problem size. We compare it to the SoA bound that one can formulate as follows:

\begin{proposition}[\citet{marasevic2015fast}]
\label{marasevicbound}
Let the vector $\x^* $ be the optimal solution to the $\alpha$-fair resource allocation problem. Then, for all $r\in R$:
\begin{itemize}
\item if $0<\alpha \leq 1,\quad x_r^* \geq m_r(\alpha) \coloneqq \left( \frac{w_r}{w_{\max} M}\displaystyle \min_{j \in J_r} \frac{c_j}{|R_j|}\right)^{1/\alpha} c_{\max}^{1-1/\alpha}$
\item if $\alpha >1,\quad x_r^* \geq m_r(\alpha)\coloneqq \left( \frac{w_r}{w_{\max} M}\right)^{1/\alpha}\displaystyle \min_{j\in J_r} \frac{c_j}{|R_j|}\left(\frac{c_{\min}}{c_{\max}}\right)^{1-1/\alpha}$
\end{itemize}
where $w_{\max} = \max w_r$, $M = \min\{|R|,|J|\}$, $c_{\min} = \min c_j$ and $c_{\max} = \max c_j$.
\end{proposition}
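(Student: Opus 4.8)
The plan is to characterize $\x^*$ through its first‑order (KKT) optimality conditions and then reduce the desired lower bound to an \emph{upper} bound on a sum of Lagrange multipliers. Since $f^\alpha$ is strictly concave for $\alpha>0$ and the feasible set $\setC$ is a polytope with nonempty interior (capacities being positive and $\setC$ strictly larger than $\{\0\}$), the maximizer $\x^*$ is unique, strictly positive, and satisfies, for multipliers $\lambda_j\ge0$ attached to the capacity constraints, the stationarity relation $w_r (x_r^*)^{-\alpha}=\sum_{j\in J_r}\lambda_j$ together with complementary slackness ($\lambda_j>0$ forces $\sum_{r\in R_j}x_r^*=c_j$). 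Consequently $(x_r^*)^\alpha = w_r/\sum_{j\in J_r}\lambda_j$, so it suffices to upper bound $\sum_{j\in J_r}\lambda_j$.

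The key estimate is a per‑link bound on the multipliers: on any saturated link $j$ the average allocation is $c_j/|R_j|$, so some request $r'\in R_j$ satisfies $x_{r'}^*\ge c_j/|R_j|$; feeding this into stationarity gives $\lambda_j\le\sum_{j'\in J_{r'}}\lambda_{j'}=w_{r'}(x_{r'}^*)^{-\alpha}\le w_{\max}\,(|R_j|/c_j)^{\alpha}$ (unsaturated links carry $\lambda_j=0$). Summing this over $j\in J_r$ and invoking $c_{\min}\le c_j\le c_{\max}$, $|R_j|\ge1$, together with the quantity $\min_{j\in J_r}c_j/|R_j|$, I will control $\sum_{j\in J_r}\lambda_j$; the exponent $\alpha$ must be handled in the two regimes separately, since bounding $(|R_j|/c_j)^\alpha$ against $\max_{j\in J_r}|R_j|/c_j$ produces the factor $c_{\max}^{1-\alpha}$ when $0<\alpha\le1$ and the factor $(c_{\max}/c_{\min})^{\alpha-1}$ when $\alpha>1$. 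Inverting $(x_r^*)^\alpha=w_r/\sum_{j\in J_r}\lambda_j$ and simplifying the capacity exponents then yields exactly $m_r(\alpha)$, but with the counting constant $|J|$ in place of $M$.

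To obtain the complementary constant $|R|$ I will use the variational inequality of Definition~\ref{defalphafair} against the explicit feasible \emph{equal‑share} point $d_r\coloneqq\min_{j\in J_r}c_j/|R_j|$ (feasibility follows from $\sum_{r\in R_j}d_r\le|R_j|\cdot c_j/|R_j|=c_j$). Discarding the nonnegative remaining terms leaves $w_r\,d_r\,(x_r^*)^{-\alpha}\le\sum_{s\in R}w_s (x_s^*)^{1-\alpha}$, and bounding the right‑hand side through $x_s^*\le c_{\max}$ (valid since $x_s^*\le c_j\le c_{\max}$ for any $j\in J_s$) produces the same formula with the constant $|R|$. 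Because $m_r(\alpha)$ is decreasing in the counting constant, the two valid bounds combine into the single statement with $M=\min\{|R|,|J|\}$, which is simply their larger value.

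The main obstacle lies in the regime $\alpha>1$. There the exponent $1-\alpha$ is negative, so the right‑hand side $\sum_{s}w_s(x_s^*)^{1-\alpha}$ in the variational route can no longer be controlled by the trivial upper bound $x_s^*\le c_{\max}$ and instead demands a prior positive lower bound on every allocation; this is precisely the mismatch that forces the corrective factor $(c_{\min}/c_{\max})^{1-1/\alpha}$ to appear in $m_r(\alpha)$ for $\alpha>1$. Keeping the two estimates consistent across both regimes -- the multiplier route, which is sign‑robust and yields $|J|$, and the variational route, which yields $|R|$ -- and verifying that they assemble exactly into the stated capacity exponents $c_{\max}^{1-1/\alpha}$ and $(c_{\min}/c_{\max})^{1-1/\alpha}$, is the most delicate bookkeeping of the argument.
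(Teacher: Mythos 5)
The paper never proves this proposition: it is imported from \citet{marasevic2015fast} as the SoA bound and used as a benchmark, so your reconstruction has to stand on its own. Most of it does. The multiplier route is sound: $\x^*$ is strictly positive, stationarity gives $w_r(x_r^*)^{-\alpha}=\sum_{j\in J_r}\lambda_j$, the averaging argument on a saturated link gives $\lambda_j\le w_{\max}\left(|R_j|/c_j\right)^{\alpha}$, and since $|R_j|/c_j\le\left(\min_{j'\in J_r}c_{j'}/|R_{j'}|\right)^{-1}$ for every $j\in J_r$, summing over the route yields
\[
x_r^*\;\geq\;\left(\frac{w_r}{w_{\max}\,|J|}\right)^{1/\alpha}\mu_r,
\qquad \mu_r\coloneqq\min_{j\in J_r}\frac{c_j}{|R_j|},
\]
which implies both displayed formulas with $M$ replaced by $|J|$: indeed $\mu_r\ge\mu_r^{1/\alpha}c_{\max}^{1-1/\alpha}$ when $\alpha\le1$ (because $\mu_r\le c_{\max}$ and $1-1/\alpha\le0$), and $(c_{\min}/c_{\max})^{1-1/\alpha}\le1$ when $\alpha>1$. (Your intermediate claim that $(|R_j|/c_j)^{\alpha}$ can be traded against $\max_{j\in J_r}|R_j|/c_j$ at the price of $(c_{\max}/c_{\min})^{\alpha-1}$ when $\alpha>1$ is false -- $|R_j|$ is a cardinality, not controlled by capacity ratios -- but it is also unnecessary, since the clean bound above is already stronger than the target.) Likewise, the variational argument at the feasible equal-share point $d_r=\mu_r$ is correct for $0<\alpha\le1$ and gives exactly the stated formula with constant $|R|$.

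The genuine gap is the regime $\alpha>1$ combined with $|R|<|J|$, which is precisely when $M=|R|$ and the constant $|J|$ from the multiplier route is too weak. You correctly observe that $\sum_{s}w_s(x_s^*)^{1-\alpha}$ cannot be bounded via $x_s^*\le c_{\max}$ when $1-\alpha<0$, but the remedy you invoke -- ``a prior positive lower bound on every allocation'' -- is circular: that lower bound is the statement being proved. Nor does the natural bootstrap rescue it: applying the inequality first at $r_{\min}=\argmin_s x_s^*$ gives $x_{r_{\min}}^*\ge w_{r_{\min}}d_{r_{\min}}/(w_{\max}|R|)$, and re-injecting leaves a factor $\bigl(w_{r_{\min}}d_{r_{\min}}/(w_{\max}|R|)\bigr)^{1-1/\alpha}$; since $m_r(\alpha)$ contains no dependence on $\min_s w_s$, no rearrangement recovers the claimed formula when $w_{r_{\min}}/w_{\max}$ is small. (This bootstrap is exactly how the paper proves its own Theorem~\ref{theorem} for $\alpha\ge1$, and the resulting bound carries precisely this extra dependence, through $p_{r_0}$.) As written, you have therefore proven the proposition only with $M$ replaced by $|J|$ when $\alpha>1$. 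One way to close the gap inside your own framework: take the multipliers to be an extreme point of the polyhedron $\{\lambda\ge0:\ \lambda_j=0 \text{ on unsaturated links},\ w_r(x_r^*)^{-\alpha}=\sum_{j\in J_r}\lambda_j\ \forall r\in R\}$; a basic solution has at most $\min\{|R|,|J|\}$ nonzero coordinates, so $\sum_{j\in J_r}\lambda_j$ has at most $M$ nonzero terms, each still bounded by $w_{\max}\mu_r^{-\alpha}$, and the clean bound above then holds with $|J|$ replaced by $M$, uniformly in $\alpha>0$. Some such idea is what the proposal is missing.
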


We seek to improve the above bound by removing the global dependencies on $w_{\max}$, $|R_j|$, and $M$, $c_{\min}$ and $c_{\max}$, those parameters being the major degradation factor when the size or congestion of the problems increase. 

For each request $r\in \setR$, let $b_r \coloneqq \min_{j\in \setJ_r} c_j$. The so-called \emph{utopia point} $\b\coloneqq (b_r)_{r\in \setR}$ is the (infeasible when the problem is non trivial) allocation representing the value each request would receive if they were alone in the network, that is, its dictatorial allocation. Our bound for the $(\w,\alpha)$-fair allocation only depends on the utopia point (hence on the capacity vector $\c$), the matrix $\A$ and on the weight vector $\w$. For $r \in \setR$, let $\setR^r \coloneqq \{s\in \setR; \setJ_r \cap \setJ_s \neq   \varnothing \}$, i.e., the set of requests sharing at least one resource with $r$ and $\anti{\setR^{r}} \coloneqq \setR \minus \setR^{r}$. 

First of all, we use the separability of the objective function of Problem~\eqref{eq:centralProblem} to better estimate our lower bound on a restricted problem. Specifically, we prove a \emph{restriction lemma} (see Lemma~\ref{lemma}) that permits one to avoid unnecessary dependencies between requests that do not share resources together. Then, we prove our general lower bound on the corresponding restricted problems. Thanks to Lemma~\ref{lemma}, the bound remains unchanged in the original problem.

\subsection{A restriction lemma}

In this paragraph, we show that instead of evaluating our bound on Problem~\eqref{eq:centralProblem}, one can use a smaller request-centric problem. Specifically, let $\x^*$ denote the optimal solution of Problem~\eqref{eq:centralProblem} and let $r_0 \in R$ be \emph{an arbitrary request}. We define the \emph{restricted problem} at $r_0$, as the following:

 \begin{equation}
 \label{eq:restricted} \tag{$\widetilde{\setP_{r_0}}$}
 \min \sum_{r\in \setR^{r_0}} -f_r^\alpha(w_r,x_r) 
 \end{equation}
 \[\mbox{s.t. } \sum_{r  \in \setR_j \cap \setR^{r_0}} x_r \leq \tilde{c}_j \coloneqq c_j \quad \forall j\in \setJ_{r_0}\]
 \[\mbox{and } \sum_{r\in \setR_j \cap \setR^{r_0 }} x_r \leq \tilde{c}_j \coloneqq c_j - \sum_{r\in \setR_j \cap \anti{\setR^{r_0}}} x_r^* \quad \forall j\in \setJ - \setJ_{r_0}.\]
 Intuitively, Problem~\eqref{eq:restricted} arises when the allocations of all the requests that do not share any link with $r_0$ are fixed to their optimal $\alpha$-fair value (that is, following the vector $\x^*$), and one needs to compute the $\alpha$-fair allocation of the remaining requests, that is, the requests within $R^{r_0}$ that share at least one resource with $r_0$. The capacity constraints are thus updated taking into account the amounts of resources that are already allocated, as shows the second line of the constraints. Note in passing that all the links in $J-J_{r_0}$ that do not serve any of the requests within $R^{r_0}$ form trivial constraints in \eqref{eq:restricted} and can hence be removed without any loss.
 
 We then have the following result:
 
 \begin{lemma}
 \label{lemma}
The restriction to~\eqref{eq:restricted}  does not change the optimal allocation of the remaining requests: if $\x$ is the optimal solution of the Problem~\eqref{eq:restricted}, then, $x_s = x^*_s$, for $s\in \setR^{r_0}$.
\end{lemma}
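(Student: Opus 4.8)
The plan is to exploit two facts: the objective $f^\alpha(\w,\cdot)$ is \emph{separable} across requests, and each $f_r^\alpha(w_r,\cdot)$ is \emph{strictly} concave, so both Problem~\eqref{eq:centralProblem} and Problem~\eqref{eq:restricted} have a \emph{unique} optimizer over their (convex) feasible sets. It therefore suffices to show that the restriction $(x_s^*)_{s\in\setR^{r_0}}$ of the global optimum is itself the optimal solution of Problem~\eqref{eq:restricted}; uniqueness then forces the optimizer $\x$ of \eqref{eq:restricted} to coincide with it, giving $x_s=x_s^*$.

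First I would check that $(x_s^*)_{s\in\setR^{r_0}}$ is feasible for Problem~\eqref{eq:restricted}. The elementary observation driving everything is that a link $j\in\setJ_{r_0}$ is used only by requests that compete with $r_0$: if $r\in\setR_j$ then $j\in\setJ_r\cap\setJ_{r_0}$, so $r\in\setR^{r_0}$, i.e.\ $\setR_j\subseteq\setR^{r_0}$. Hence on these links the constraint of \eqref{eq:restricted} is literally the original capacity constraint $\sum_{r\in\setR_j}x_r^*\le c_j$, which holds since $\x^*\in\setC$. On a link $j\in\setJ-\setJ_{r_0}$ I would use the disjoint splitting $\setR_j=(\setR_j\cap\setR^{r_0})\cup(\setR_j\cap\anti{\setR^{r_0}})$, which turns $\sum_{r\in\setR_j}x_r^*\le c_j$ into exactly $\sum_{r\in\setR_j\cap\setR^{r_0}}x_r^*\le\tilde c_j$; this also shows $\tilde c_j\ge 0$, so \eqref{eq:restricted} is well posed.

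Next, and this is the core of the argument, I would characterize optimality through the variational inequality of Definition~\ref{defalphafair} rather than through explicit KKT multipliers. Take any $\y=(y_s)_{s\in\setR^{r_0}}$ feasible for \eqref{eq:restricted} and \emph{glue} it to the global optimum by setting $x_r=y_r$ for $r\in\setR^{r_0}$ and $x_r=x_r^*$ for $r\in\anti{\setR^{r_0}}$. The same two-case link analysis as above — now with the definition of $\tilde c_j$ used in the reverse direction — shows $\x\in\setC$: on $j\in\setJ_{r_0}$ feasibility of $\y$ suffices since $\setR_j\subseteq\setR^{r_0}$, and on $j\in\setJ-\setJ_{r_0}$ one has $\sum_{r\in\setR_j}x_r=\sum_{r\in\setR_j\cap\setR^{r_0}}y_r+\sum_{r\in\setR_j\cap\anti{\setR^{r_0}}}x_r^*\le\tilde c_j+\sum_{r\in\setR_j\cap\anti{\setR^{r_0}}}x_r^*=c_j$. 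Applying the optimality inequality $\sum_{r\in\setR}w_r\frac{x_r-x_r^*}{x_r^{*\alpha}}\le 0$ of $\x^*$ to this $\x$, every term indexed by $\anti{\setR^{r_0}}$ vanishes (there $x_r=x_r^*$), leaving precisely $\sum_{s\in\setR^{r_0}}w_s\frac{y_s-x_s^*}{x_s^{*\alpha}}\le 0$. Since this holds for every feasible $\y$ and since $x_s^*>0$ for all $s$, the equivalent variational characterization in Definition~\ref{defalphafair} (i.e.\ first-order optimality for the separable concave objective $\sum_{r\in\setR^{r_0}}f_r^\alpha$) identifies $(x_s^*)_{s\in\setR^{r_0}}$ as the maximizer of \eqref{eq:restricted}.

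Finally, strict concavity of the restricted objective over its convex feasible set gives uniqueness of the optimizer, so $x_s=x_s^*$ for all $s\in\setR^{r_0}$, as claimed. The delicate point — and the only place requiring care — is the gluing step: one must verify feasibility in \emph{both} directions, namely that restricting $\x^*$ yields a point feasible for \eqref{eq:restricted}, and conversely that lifting a restricted-feasible $\y$ back to the global problem (by freezing the non-competing requests at $\x^*$) stays in $\setC$. This is exactly what the reduced capacities $\tilde c_j$ are engineered to guarantee, and the observation $\setR_j\subseteq\setR^{r_0}$ for $j\in\setJ_{r_0}$ is what makes leaving the capacities on $\setJ_{r_0}$ untouched correct.
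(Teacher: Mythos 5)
Your proof is correct, but it follows a genuinely different route from the paper's. The paper introduces an auxiliary problem: minimize $\sum_{r\in\setR} -f_r^\alpha(w_r,x_r)$ subject to $\A\x\leq\c$ together with the extra lower-bound constraints $x_r \geq x_r^*$ for $r\in\anti{\setR^{r_0}}$; since both $\x^*$ and the auxiliary optimum $\tilde\x$ are feasible for this problem and for \eqref{eq:centralProblem}, a value-sandwiching argument gives $f^\alpha(\w,\tilde\x)=f^\alpha(\w,\x^*)$, uniqueness of the solution of \eqref{eq:centralProblem} forces $\tilde\x=\x^*$, and the paper then fixes $x_r=x_r^*$ on $\anti{\setR^{r_0}}$, which by separability turns the auxiliary problem into \eqref{eq:restricted} (up to an additive constant), concluding by uniqueness of the restricted optimizer. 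You instead verify directly, via the variational inequality of Definition~\ref{defalphafair}, that the restriction $(x_s^*)_{s\in\setR^{r_0}}$ is optimal for \eqref{eq:restricted}: you glue an arbitrary restricted-feasible $\y$ to $\x^*$ on $\anti{\setR^{r_0}}$, check that the glued point lies in $\setC$, and specialize the optimality inequality of $\x^*$ so that the terms on $\anti{\setR^{r_0}}$ cancel. What your approach buys is explicitness: the two-way feasibility correspondence (in particular the observation $\setR_j\subseteq\setR^{r_0}$ for $j\in\setJ_{r_0}$, and the reverse use of the reduced capacities $\tilde c_j$) is exactly the content of the paper's unproved assertion that fixing the variables yields a problem ``equivalent'' to \eqref{eq:restricted}, so your argument fills in details the paper leaves implicit. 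What the paper's route buys is brevity: it never invokes the first-order characterization, only comparison of optimal values and uniqueness. Both proofs ultimately rest on the same ingredient, uniqueness of optimizers coming from strict concavity of the objective (valid for $\alpha>0$).
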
 

\begin{proof}
Consider the problem:
 \begin{equation}
 \label{restricted}
 \min \sum_{r\in \setR} -f_r^\alpha(w_r,x_r)
 \end{equation}
 \[\mbox{s.t. } \A\x\leq \c\]
 \[ x_r \geq x_r^* \quad  \forall r \in \anti{\setR^{r_0}}. \] 
It suffices to show that the problems~\eqref{restricted} and \eqref{eq:restricted} are equivalent. Then, the unicity of the solutions permits one to conclude.

We know that the problem~\eqref{restricted} is feasible, as $\x^*$ is a feasible point. Denote its optimal solution by $\tilde{\x}$. We remark that both $\x^*$ and $\tilde{\x}$ are feasible for both problems $(\setP_\alpha)$ and \eqref{restricted}. Hence, by optimality, we necessarily have $f^\alpha(\w,\tilde{\x}) = f^\alpha(\w,\x^*)$. Moreover, for instance, problem~$(\setP_\alpha)$ has a unique optimal solution. Thus, $$\x^* = \tilde{\x}.$$  Particularly for $r\in \anti{\setR^{r_0}}, x^*_r = \tilde{x}_r.$ Thus, we can fix the values $x_r = x_r^*$ for $r\in \anti{\setR^{r_0}}$ without changing the optimum. Thus, Problem~\eqref{restricted} is equivalent to the restricted problem~\eqref{eq:restricted}.
\end{proof}

Thanks to Lemma~\ref{lemma}, we are now ready to present our lower bound on the $\alpha$-fair allocation based on the structure of the restricted problems. 

\subsection{Lower bound}

We now show the main result of this paper. We define the \emph{local midpoint} $\p$ as the following:  $$\forall r\in R\quad p_r \coloneqq \frac{w_r}{\displaystyle\sum_{s\in R^r} w_s}b_r.$$ 

%
%

\begin{theorem}
\label{theorem}
Let $\x^*$ denote the optimal solution of problem \eqref{eq:centralProblem}. Let $r_0 \coloneqq \argmin_{s\in R} p_s$. Then, $\x^*$ can be lower bounded as follows:
 \begin{itemize}
 \item if $\displaystyle\alpha \geq 1, \quad \forall r\in \setR \quad x_r^* \geq d_r(\alpha) \coloneqq p_{r_0}^{1-1/\alpha}p_r^{1/\alpha}$ 
 \item if $\displaystyle 0\less \alpha\leq 1,\quad \forall r\in \setR\quad x_r^* \geq d_r(\alpha) \coloneqq\left( \frac{w_r b_r}{\displaystyle \sum_{s\in \setR^r}w_s b_s^{1-\alpha}}\right)^{1/\alpha}$.
 \end{itemize}
 \end{theorem}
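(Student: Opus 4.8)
The plan is to combine the variational characterization of the $\alpha$-fair optimum with the restriction lemma, and then to treat the two ranges of $\alpha$ by closely related arguments that both hinge on a single inequality. Fix a request $r$ and pass to the restricted problem \eqref{eq:restricted} at $r$. By Lemma~\ref{lemma} its optimum coincides with $\x^*$ on $\setR^r$, and since \eqref{eq:restricted} maximizes the concave separable objective $\sum_{s\in\setR^r} f_s^\alpha$ over a convex set with $\x^*>\0$, the first-order optimality condition reads
\[\sum_{s\in\setR^r} \frac{w_s}{(x_s^*)^\alpha}\,(x_s-x_s^*)\leq 0\]
for every feasible $\x$ of \eqref{eq:restricted}. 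I would test this against the allocation giving $b_r$ to request $r$ and $0$ to every other request of $\setR^r$; this is feasible because $b_r\leq c_j$ on each link $j\in\setJ_r$ while the remaining constraints stay slack. After expanding and moving the $s=r$ term to the right, the right-hand side becomes a sum over all of $\setR^r$, yielding the key inequality
\begin{equation}
w_r\,\frac{b_r}{(x_r^*)^\alpha}\leq \sum_{s\in\setR^r} w_s\,(x_s^*)^{1-\alpha},
\tag{$\ast$}
\end{equation}
valid for every $r\in\setR$ and every $\alpha>0$, from which both bounds will be extracted.

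For $0\less\alpha\leq 1$ the exponent $1-\alpha$ is nonnegative, so the uniform upper bound $x_s^*\leq b_s$ (valid since $x_s^*\leq c_j$ for each $j\in\setJ_s$, hence $x_s^*\leq\min_{j\in\setJ_s}c_j=b_s$) gives $(x_s^*)^{1-\alpha}\leq b_s^{1-\alpha}$. Substituting this into the right-hand side of $(\ast)$ and solving for $x_r^*$ produces $x_r^*\geq\big(w_r b_r/\sum_{s\in\setR^r}w_s b_s^{1-\alpha}\big)^{1/\alpha}=d_r(\alpha)$, exactly the claimed bound; this case is essentially immediate.

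The case $\alpha\geq 1$ is the delicate one, because now $1-\alpha\leq 0$ and bounding the right-hand side of $(\ast)$ requires a \emph{lower} bound on every $x_s^*$, which is what we are trying to prove. I would break this circularity in two stages. First I establish the uniform floor $x_s^*\geq p_{r_0}$ for all $s$: letting $m\coloneqq\min_{s\in\setR}x_s^*$ be attained at some request $r_1$, apply $(\ast)$ at $r=r_1$ and use $(x_s^*)^{1-\alpha}\leq m^{1-\alpha}$ (licit since $x_s^*\geq m$ and $1-\alpha\leq 0$) to obtain $w_{r_1}b_{r_1}/m^\alpha\leq m^{1-\alpha}\sum_{s\in\setR^{r_1}}w_s$, which rearranges to $m\geq p_{r_1}\geq p_{r_0}$ by the choice $r_0=\argmin_{s\in\setR}p_s$.

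In the second stage I feed this floor back into $(\ast)$: replacing $(x_s^*)^{1-\alpha}$ by $p_{r_0}^{1-\alpha}$ gives $(x_r^*)^\alpha\geq w_r b_r/\big(p_{r_0}^{1-\alpha}\sum_{s\in\setR^r}w_s\big)=p_r\,p_{r_0}^{\alpha-1}$, and taking $\alpha$-th roots yields $x_r^*\geq p_r^{1/\alpha}p_{r_0}^{1-1/\alpha}=d_r(\alpha)$. The main obstacle is precisely this bootstrap: recognizing that the minimum-coordinate argument pins the global floor at $p_{r_0}$, after which the finer geometric-interpolation bound drops out by a single substitution. Everything else is routine algebra, and the two cases meet consistently at $\alpha=1$, where both formulas reduce to the local midpoint $p_r$.
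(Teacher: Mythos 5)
Your proposal is correct and follows essentially the same route as the paper: pass to the restricted problem via Lemma~\ref{lemma}, test the variational inequality at the dictatorial allocation to get the key inequality $(\ast)$, bound $(x_s^*)^{1-\alpha}$ by $b_s^{1-\alpha}$ when $0<\alpha\leq 1$, and for $\alpha\geq 1$ run the same minimum-coordinate bootstrap (your two-stage floor-then-substitute argument is just a reordering of the paper's step of applying its inequality at $r_{\min}$ and plugging the resulting bound $x^*_{r_{\min}}\geq p_{r_{\min}}\geq p_{r_0}$ back in). The only cosmetic difference is that you invoke feasibility of $\x^*$ in the original problem to get $x_s^*\leq b_s$, where the paper passes through $\tilde{b}_s\leq b_s$; both are valid.
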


\begin{proof}
We first prove the proposition for $\alpha \geq1$. Let us define the request $r_{\min}$ as the request with the least optimal allocation: $r_{\min} = \argmin_{s\in R} x^*_s.$
By definition of $r_0$, we have: \begin{equation}p_{r_{\min}} \geq p_{r_0} \end{equation}
Let $r\in \setR^r$. By Lemma~\ref{lemma}, it suffices to show the inequality in the restricted problem $(\widetilde{\setP_{r}})$ associated to $r$. Let $\setC^r$ denote its feasible set. Thus, for all $(x_s)_{s\in \setR^{r}} \in \setC^r$ we have:

\[\sum_{s\in\setR^r} w_s \frac{x_s - x^*_s}{x^{*\alpha}_s} \leq 0,\]
This inequality holds for all feasible $(x_s)_{s\in R^r} \in C^r$. Thus, we evaluate it at the dictatorial allocation of $r$, that is, at the point $\x$ defined as $x_r = \tilde{b}_r$ and $x_s = 0$ for all $s\in R^r - \{r\}$.  Let us note in passing that $\tilde{b}_r = \min_{j\in \setJ_r} \tilde{c}_j = \min_{j\in \setJ_r} c_j = b_r$.
 Thus,  
 \[w_r \tilde{b}_r =w_r{b_r} \leq x_r^{*\alpha}\sum_{s\in \setR^r} w_s x_s^{*1-\alpha} \leq \left(\sum_{s\in \setR^r} w_s\right) x_{r_{\min}}^{*1-\alpha} x_r^{*\alpha},\]
 where we remind that  $r_{\min} = \argmin_{s\in R} x^*_s$ and $1-\alpha \leq0$. Rearranging the terms, one gets:
 \[\frac{w_r b_r}{ \displaystyle\sum_{s\in \setR^r}w_s} x_{r_{\min}}^{*\alpha -1}\leq x_r^{*\alpha},\]
 which yields:
 \begin{equation}\label{proof:general}
p_r^{1/\alpha} x_{r_{\min}}^{*1 -1/\alpha}\leq x^*_r
\end{equation}
In particular, applying equation~\eqref{proof:general} to $r = r_{\min}$, we get:

\begin{equation}\label{proof:min}
x^*_ {r_{\min}} \geq p_{r_{\min}} \geq p_{r_0}
\end{equation}
Finally, we plug equation~\eqref{proof:min} in equation~\eqref{proof:general} to obtain the desired lower bound on $x^*_r$ (because $1-1/\alpha \geq 0$).

Next, we show the bound for $0 \less \alpha < 1$. In the same fashion, we look at the restricted problem. Let $r\in \setR$ and consider its restricted problem. Then, one has:

\[\frac{w_r{b_r}}{ x_r^{*\alpha}} \leq\sum_{s\in \setR^r} w_s x_s^{*1-\alpha} 
\leq \sum_{s\in \setR^r} w_s \tilde{b}_s^{1-\alpha}\leq \sum_{s\in \setR^r} w_s b_s^{ 1-\alpha}. \]
Rearranging the terms finally provides the desired bound. For any value of $\alpha$, one can remark that the bound $(d_r(\alpha))_{r\in R}$ only depends on the capacity vector $\c$, the weight vector $\w$, and the link-route incidence matrix $\A$.
\end{proof}
 \begin{figure*}[t]
 \includegraphics[width = \textwidth]{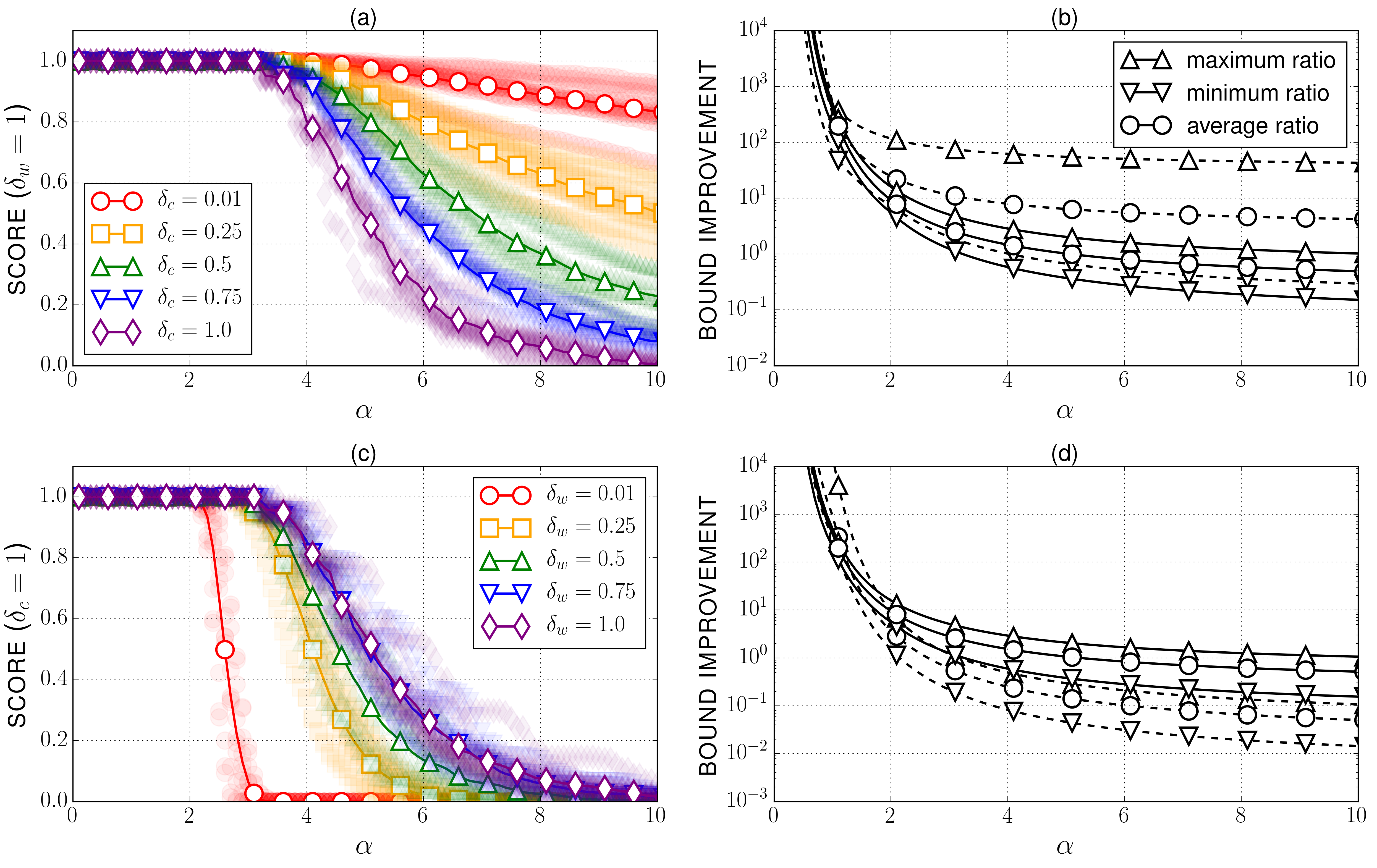}
 \caption{A comparison of the two bounds. The scores, and the minimum, average and maximum bound improvements are illustrated in the cases of (a)-(b) a constant $\delta_w$ for different values of $\delta_c$, and of (c)-(d) a constant $\delta_c$ for different values of $\delta_w$. Figures (b) and (d) show the bound improvements in the two extreme situations $\delta_c$ (resp. $\delta_w$) $=0.01$ (resp. $1$) in dashed lines (resp. solid lines).}
  \label{fig:1linkbound}
 \end{figure*}
 
 \subsection{Illustration}
 \label{illustration}
 To conclude this section, we illustrate a comparison of the two presented lower bounds $\m$ and $\dd$ introduced in Proposition~\ref{marasevicbound} and Theorem~\ref{theorem}, respectively,  under different regimes. Given the formulas, one can remark that the sensitivity of the bound to arbitrary problem sizes should be lessened as now more focused on local structures. For $\alpha  \leq 1$, we obtain request-centric formulas. For general $\alpha >1$, this elimination came  with the dependency on the global minimum local midpoint value $p_{r_0}$. Intuitively, one can remark that the two bounds may react differently to a fluctuating asymmetry of the weight vector $\w$ or the capacity vector $\c$, namely, a variation of the two parameters $\delta_w \coloneqq \frac{\min w_r}{\max w_r}$ and $\delta_c \coloneqq \frac{c_{\min}}{c_{\max}}$. For a better vision, we illustrated this behavior in Figure~\ref{fig:1linkbound}. 
 
 The two bounds were compared on instances with 1000 requests over a same graph of type \emph{barabasi(100,4)} (see \citet{barabasi}). The routes were generated at random by taking the shortest path between pairs of sources and destinations drawn uniformly at random. The weights  (resp. link capacities) were also drawn uniformly at random within intervals $I$ satisfying $\inf I / \sup I =\delta_w$ (resp. $\delta_c$). For each instance, and each $\alpha$, we define the \emph{score} of $\dd$ as the number $|\{r : d_r(\alpha) > m_r(\alpha)\}| / |R|$. The score represents the proportion of requests for which our bound  $\dd(\alpha)$ beats the SoA bound $\m(\alpha)$ for a particular $\alpha$. In Figure~\ref{fig:1linkbound}(a), the parameter $\delta_w$ was fixed to $1$ (which namely means $\w = \bm{1}$) and we plotted the score of $\dd$ versus $\alpha$ for different values of $\delta_c$. Figure~\ref{fig:1linkbound}(c) shows the score in the other extreme situation $\delta_c =1$ (which means all the link capacities are equal) for different values of $\delta_w$. 
 
 In order to appreciate the quality of the bound improvement, if any, we plotted, in Figures~\ref{fig:1linkbound}(b) and \ref{fig:1linkbound}(d), the corresponding bound improvements, measured with the values of the ratios $d_r(\alpha)/m_r(\alpha)$. To preserve readability of the plots, we represented only the extreme situations corresponding to the values $\delta_c =0.01$ (dashed lines) and $\delta_c = 1$ (solid lines) for Figure~\ref{fig:1linkbound}(b) and to the values $\delta_w =0.01$ (dashed lines) and $\delta_w = 1$ (solid lines) for Figure~\ref{fig:1linkbound}(d). Figures~\ref{fig:1linkbound}(b) and \ref{fig:1linkbound}(d) show the best, worst, and average improvements encountered in the same problem instance. All the points represented in Figure~\ref{fig:1linkbound} correspond to an average over 10 instances generated under identical conditions. In Figures~\ref{fig:1linkbound}(a) and \ref{fig:1linkbound}(c), we also included the specific points as translucent scattered markers.  

According to Figure~\ref{fig:1linkbound}, our bound is an absolute improvement for values of $\alpha$ in the interval [0,2] (thus including the max-throughput, proportional fairness, and min-delay popular concepts) in all situations. Particularly for proportional fairness, the simulations show that we improved the bound $\m$ by two orders of magnitude in all situations. For min-delay fairness, the bound is generally improved on average by a multiplicative factor between 1 and several tens. For greater values of $\alpha$, it is interesting to see that either $\dd$ or $\m$ is more adapted to certain problem structures. For instance, $\dd$ will be of greater interest when the network link capacities are more heterogeneous, $\delta_c \ll 1$ (which may correspond to situations where the network is asymmetrically congested), whereas $\m$ is more adapted to asymmetrically weighted problems, $\delta_w \ll 1$. One can thus conclude that the two available bounds complement each other for general $\alpha \geq 1$.

After presenting our bound, we now demonstrate how it permits one to boost the performance of an algorithm that solves the $(\w,\alpha)$-fair resource allocation problem.
 
%
 
 The next section is dedicated to the presentation of the algorithm, based on ADMM.
\section{Fast and Distributed ADMM (FD-ADMM)}
\label{sec:admm}

Several approaches may be used to tackle the $(\w,\alpha)$-fair resource allocation problem (e.g., see \citet{kelly1998rate} and \citet{palomar2006tutorial} for a tutorial). One of them is the Alternating Directions Method of Multipliers (ADMM)  (see, e.g., \citet{boyd2011distributed}). The ADMM is well known for its distributivity properties that permit one to decouple constraints handled in parallel then plugged in together by means of \emph{consensus constraints}. 
In \citet{allybokus2017real}, these properties are used to design a fully distributed algorithm that solves optimally the problem in the context of traffic rate control in distributed Software-Defined Networks.
For a description of the general ADMM framework, the reader may refer to \citet{boyd2011distributed}, and for a more detailed construction of the presented algorithm, to  \citet{allybokus2017real}. In this section, we briefly describe the design of the distributed algorithm used in the latter.

\subsection{Algorithm overview}

Assume the network links are split into a number $P \geq 1$ of domains. Each domain $p$ corresponds to a subset $J_p \subset J$ of links forming a covering of the whole set $J$:
\[
\displaystyle\bigcup_{p=1}^P J_p = J.
\]
 For $p = 1 \ldots P$, let $R_p = \{r\in R, J_p \cap J_r \neq  \varnothing\}$  be the set of requests that traverse domain $p$, and $I_r = \{q\in [1,P] ; r\in R_p\}$ the set of domains the request $r$ traverses. The problem~\eqref{eq:centralProblem} can thus be rewritten as:
 :
 \begin{align}
     	& \min_{\x\in \setC} \sum_{r\in R} -f_r^\alpha(w_r,x_r) \nonumber\\
  =	&  \min_{\x\in \setC} \sum_{p = 1}^P \sum_{r\in R_p} -\frac{1}{|I_r|} f_r^\alpha(w_r,x_r)\nonumber\\
  =	&  \min_{\x} \sum_{p = 1}^P  \left\{\iota_p(\x) + \sum_{r\in R_p} -\frac{1}{|I_r|} f_r^\alpha(w_r,x_r)\right\}  \nonumber\\
  :=	& \min_{\x} \sum_{p = 1}^P  \iota_p(\x)+g_p(\w,\x),\label{eq:gpsingle}
 \end{align}
 where $\iota_p$ is the indicator function of the capacity subset associated to domain $p$: \[\iota_p(\x) = \sum_{j\in J_p} \iota_j(\x), \quad \iota_j(\x) = \left\{\begin{array}{lr} 0 & \displaystyle\mbox{ if } \sum_{r \in R_j} x_r\leq c_j \\ \infty & \mbox{ otherwise}\end{array} \right.\]
Further, we separate artificially the problem by creating a private variable $\x_p \in \RR^{|R_p|}$ for each domain $p$, and by enforcing the agreement upon their values between domains with consensus constraints. Problem formulation~\eqref{eq:gpsingle} now reads:
 
 \begin{align}
\min 		& \quad\sum_{p=1}^P g_p(\w,\x_p) + \iota_p(\x_p)\nonumber \\
 \mbox{ s.t }	& \quad x_{pr} = x_{qr} \quad\forall p,q \in I_r\quad \forall r\in R\label{eq:maconsensus}\\
 		& \quad \x_p \in \RR^{|R_p|} \quad \forall p =1\ldots P.\nonumber
 \end{align}
 
Finally, we decompose the problem by separating the private objective of each domain. For each domain $p$, and each $j \in J_p$, the vector $\y_j$ defines a copy of the variable $\x_p$ for link $j$ and is reserved for the component function $\iota_j$. We can write Problem~\eqref{eq:maconsensus} in the following form:

 \begin{align}
\min 		& \quad\sum_{p=1}^P g_p(\w,\x_p) + \sum_{j\in J_p} \iota_j(\y_j) \coloneqq \sum_{p\in P} G_p(\w,\x_p,\y_p) \nonumber\\
 \mbox{ s.t }	& \quad x_{pr} = x_{qr} \quad\forall p,q \in I_r\quad \forall r\in R\label{eq:admm1}\\
 		& \quad x_{pr} = y_{jr} \quad \forall j\in J_p\quad \forall r\in R_j \nonumber \\
 		& \quad \x_p \in \RR^{|R_p|} \quad \forall p =1\ldots P\nonumber\\
 		& \quad \y_j \in \RR^{|R_j|} \quad\quad \forall j \in J\nonumber \\
 		& \quad \y_p = (\y_j)_{j\in J_p} \quad \forall p \in P\nonumber 
 \end{align}

\noindent Let $\chi$ denote the indicator function of the feasible set~\eqref{eq:admm1}. Then, the formulation takes the compact 2-block form:
\begin{align}
\min 		& \sum_{p\in P} G_p(\w,\x_p,\y_p) + \chi((\x'_p)_{p\in P},(\y'_p)_{p\in P})\label{eq:compact}\\
\mbox{s.t. } & (\x_p,\y_p) = (\x'_p,\y'_p)\nonumber
\end{align}

Applied to the last formulation~\eqref{eq:compact}, the distributed ADMM is described in Algorithm~\ref{FDADMM}. In lines~\ref{dualj} and \ref{updatecentral}, the variables $\bm{u}_j\in \RR^{|R_j|} $ and $\bm{v}_p \in \RR^{|R_p|}$ are dual variables associated with the constraints $\{\y_j = \y'_j\}$ and $\{\x_p = \x'_p\}$, respectively in \eqref{eq:compact}. Also, {\sc P}$(j,\cdot)$ is the Euclidean projection onto the simplex $\{\y_j \in \RR^{R_j} \mbox{ s.t. }  y_{jr} \geq d_r \mbox{ and }\sum_{r\in R_j} y_{jr} \leq c_j\}$, $\lambda \gtr 0$ is a scalar reciprocal penalty parameter, and $\bm{d} \in \RR^{R}$ is a lower bound on the $(\w,\alpha)$-fair solution that will be computed with the input parameters.

 \begin{algorithm}[t]
 \caption{Fast Distributed ADMM (FD-ADMM)}\label{FDADMM}
 \begin{algorithmic}[1]
 \Procedure{$\mbox{of\textsc{ Domain} $p$}$}{}
 
 \State \mbox{\sc Receive} $\tilde{\bm{z}}_p = (\tilde{z}_r)_{r\in R_p}$ 
 \State {\sc Receive updated reciprocal penalty } $\lambda_0$ {\sc from master}\label{receive} 
 
 \For{$j\in J_p$}
 	\State $ u_{jr} \gets u_{jr} +y_{jr} -\tilde{{z}}_{pr}\quad \forall r\in R_j$ \label{dualj}
 	\State $y_j \gets \mbox{\sc P}(j,\tilde{\z}_p - \bm{u}_j)$ \label{projectionstep}
\EndFor
\For{$r\in R_p$}
\State $v_{pr} \gets v_{pr} +x_{pr} - \tilde{z}_{pr}$
\State $ x_{pr} \gets \argmin_{x}\left\{ -f^\alpha_r(w_r,x) + \frac{1}{2\lambda_0}|| x - (\tilde{z}_{pr} - v_{pr})||^2\right\}$\label{updatecentral}
\EndFor
\EndProcedure

\Procedure{$\mbox{of \textsc{Master}}$}{}
\State {\sc Compute lower bound $\bm{d}$ and $\lambda_0$ using Eq.  \eqref{eq:optimalstep}}
\While{termination condition not met}
\State $\lambda_0 \gets \lambda$ updated using {RB, by \citet{he2000alternating}}
\For{$r\in R$}
\State \State $ \tilde{z}_{r} \gets \frac{1}{|J_r|+|Ir|} \left(\sum_{j\in J_r} y_{jr}+ \sum_{q\in I_r} x_{qr}\right)$\label{alg2:comm}
\EndFor
\For{$p\in P$}
 \State \mbox{\sc Send} $\tilde{\bm{z}}_p = (\tilde{z}_r)_{r\in R_p}$ \mbox{\sc to domain} $p$\label{send} 
\EndFor
\EndWhile
\EndProcedure
 \end{algorithmic}
 \end{algorithm}
\newpage
 \subsection{Performance}
 
 The convergence of ADMM is provably known since the 1990s (see \citet{Eckstein1992}), and its convergence rate has been widely studied. Today, the most general convergence rate of ADMM is known to be $O(1/T)$ ($T$ being the iteration count), and linear convergence rates are provably obtained for strongly convex problems. Nevertheless, the performance of the ADMM remains highly sensitive to the initialization and the update of the penalty parameter. 
In \citet{deng2016global}, the linear convergence rate of ADMM for strongly convex problems is quantified and optimized with regards to the penalty parameter, which yields an optimal tuning of it. 
Its value depends on the (global) strong convexity and the Lipschitz gradient moduli of the objective function, if those are finite. 
In \citet{allybokus2017real}, this result is applied to a central strongly convex equivalent formulation of our problem to derive an approximate adaptive tuning of the distributed version of the algorithm. The adaptive penalty parameter is computed as the optimal parameter of the centralized formulation, $\lambda_0$, given according to the formula
\vspace{-2mm}
 \begin{equation}\lambda_0 = \frac{1}{\sqrt{\sigma L}}, \label{eq:optimalstep} \end{equation}
 \vspace{-2mm}

\noindent where $\sigma$ is the strong convexity modulus of $f^\alpha(\w,\cdot)$ and $L$ is the Lipschitz modulus of its gradient. In fact, the fairness functions have singular values near $0$, which make the Lipschitz modulus not globally defined, unless the feasible set is reduced from below by means of a positive lower bound $\bm{d}$ of the optimal solution. Thus, Equation~\eqref{eq:optimalstep} is applied to $L = L_{\bm{d}}$ where $L_{\bm{d}}$ is the Lipschitz modulus of the gradient of the objective over the set of feasible points $\x$ verifying $\x\geq \bm{d}$.   

Adaptive penalty parameter schemes have been proposed to tackle this issue and provably bring consistent improvement of the convergence behavior of ADMM. One remarkable adaptive scheme can be found in \citet{he2000alternating}, in which the authors introduce the  \emph{residual balancing} (RB) principle which consists of shrinking or expanding the penalty parameter whenever the primal and dual residuals are unbalanced. 
For a definition of RB, we refer the interested reader to \citet{he2000alternating}. Although this scheme helps making the ADMM less dependent from initialization, empirical behaviors of the algorithm however suggest that there is still room and interest for better initialization. To demonstrate this, we adopt residual balancing as a default adaptive scheme of our penalties in all the algorithms of the present paper.  

In Section~\ref{sec:lowerbound}, we introduced the lower bound $\dd$ (Theorem~\ref{theorem}) on the $(\w,\alpha)$-fair optimal allocation. Next, we demonstrate how this bound permits one to enhance the performance of the ADMM for the $(\w,\alpha)$-fair resource allocation problem, and we compare it with the performance brought by the SoA bound $\m$ (Proposition~\ref{marasevicbound}). Although the lower bound permits one to adjust quickly a minimal individual resource allocation that would never be violated during the running time of the algorithm, the major feature of its introduction is in that it permits one to define an initialization of the penalty parameter that could enhance the algorithm performance. Indeed, the initialization can provide spectacular convergence acceleration, whereas reducing the feasible set at the projection line~\ref{projectionstep} of Algorithm~\ref{FDADMM} does not seem to matter, illustrating the fast convergence of FD-ADMM to modest accuracy. These observations are illustrated in the next section.

 \section{Execution}
  \label{sec:execution}
In the present simulations, we dedicate our performance evaluation to the proportionally fair resource allocation problems ($\alpha = 1$). In this section, we demonstrate the gains achievable with only tuning the initial penalty parameter of the FD-ADMM by comparing several initialization schemes. Indeed, the only difference between the different algorithms that we compare is in that the initial penalty parameter $\lambda_0$ is chosen either arbitrarily -- FD-ADMM($\lambda_0 = \lambda$), or according to Equation~\eqref{eq:optimalstep} applied to the bound $\m$ (FD-ADMM-MB) or $\dd$ (FD-ADMM-LB).
 
The problem instances were generated under the same conditions as in Section~\ref{illustration}. As it appears the parameter $\delta_w$ can deteriorate importantly the quality of our bound when small, we execute the simulations under two different situations 1) $w_r \in [.9,1]$, and 2) $w_r \in [.1,1]$.

\subsection*{Performance results}
 \begin{figure}[t]
 \vspace{2mm}
 \hspace{-5mm}
 \includegraphics[width =.5 \textwidth]{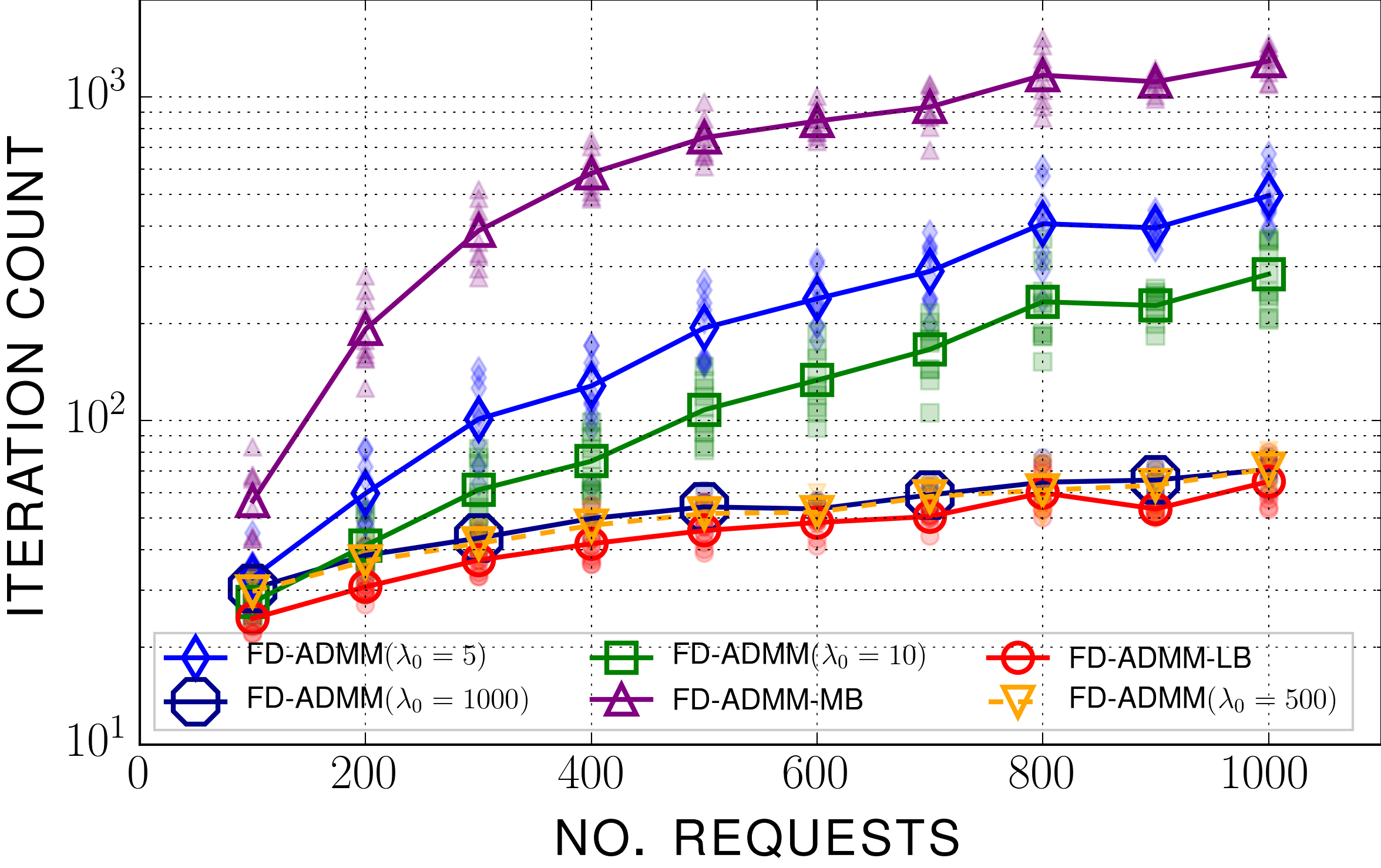}
 \caption{Iteration count versus the number of connection requests in situation 1. For FD-ADMM-LB and FD-ADMM-MB, the reciprocal penalty initialized value $\lambda_0$ lies in [110-150] and in [1, 6], respectively. }
  \label{fig:1}
 \end{figure}
In Figures~\ref{fig:1} and \ref{fig:2}, we plotted the iteration count of the algorithms under situations 1 and 2, respectively. The algorithms stop when the primal and dual residuals of the ADMM algorithm (see, e.g., \citet{boyd2011distributed}) fall below $10^{-2}$ (relatively modest accuracy). For each problem size in terms of number of different requests, we generated 10 instances of the corresponding size randomly and plotted the average performance. The specific points are also represented with scattered translucent markers to account for the exact performance of each algorithm. For each situation, we observed the performances of FD-ADMM-LB, in particular, its average initial reciprocal penalty parameter given by Equation~\eqref{eq:optimalstep} and chose several initialization values below and above this average to account for the effect of this initialization on the algorithm's performance.
  \begin{figure}[t]
\vspace{1mm}
\hspace{-5mm}
 \includegraphics[width =.5 \textwidth]{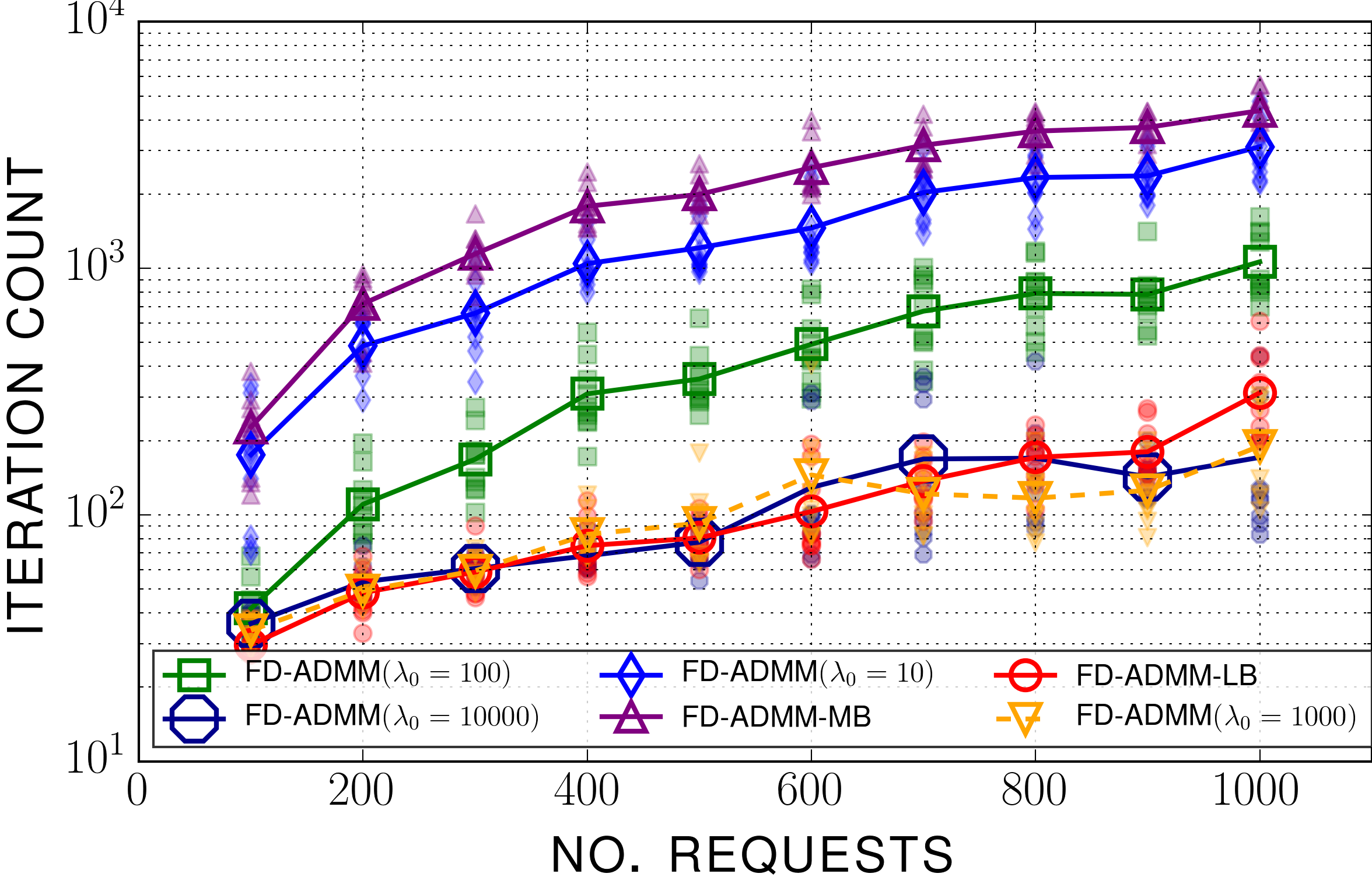}
 \caption{Iteration count versus the number of connection requests in situation 2. For FD-ADMM-LB and FD-ADMM-MB, the reciprocal penalty initialized value $\lambda_0$ lies in [500,1000] and in [0.1, 2.5], respectively. }
  \label{fig:2}
 \end{figure}
 
\textbf{Situation 1 (Figure~\ref{fig:1}).} We observe a spectacular improvement of the FD-ADMM algorithm from the scheme FD-ADMM-MB to the scheme FD-ADMM-LB, corresponding to a reduction of the running iteration count of two orders of magnitude. When $\lambda_0$ is chosen larger than the one for FD-ADMM-LB, although the performances seem satisfactory, one can observe that FD-ADMM-LB still executes faster on average.

\textbf{Situation 2 (Figure~\ref{fig:2}).} The same improvement of the performances, related to the introduction of our lower bound, is observed. It seems that for lower initialization value of $\lambda_0$, the algorithms demonstrate poorer performances. Nevertheless, one can observe that higher values of $\lambda_0$ can provide algorithms with, although not consistently, better performances than FD-ADMM-LB on average. Although this phenomenon can seem surprising after a look at situation 1, one can explain it with the fact that when the vector $\w$ is highly unbalanced (as it is the case when its values are uniformly drawn at random within $[.1,1]$) the objective function $f^\alpha(\w,\cdot)$ obtains highly asymmetric structure. Indeed, the computation of the strong convexity modulus of $f^\alpha$ in \citet{allybokus2017real} in order to obtain a desirable initialization $\lambda_0$, shows that the factor $\sigma$ in Equation~\eqref{eq:optimalstep} in fact corresponds to the smaller strong convexity modulus of the functions $f^\alpha_r(w_r,\cdot)$, which is proportional to $w_r$. Not surprisingly then, this evaluation becomes poorer when the vector $\w$ becomes unbalanced. 
Thus, it is worth considering that an accurate penalty parameter tuning becomes more difficult when the weighted fairness function symmetry is poor. 
Nevertheless, our simulations suggest that initializing the reciprocal penalty parameter according to Equation~\eqref{eq:optimalstep} applied to our lower bound permits one to obtain a satisfactory performance of the FD-ADMM. We believe this scheme can be improved in order to tackle a potential performance issue under highly asymmetric realizations of the $(\w, \alpha)$-fair resource allocation problem characterized by a very low value of $\delta_w$.


 \section{Conclusion}
 \label{sec:conclusion}
  
 We studied the structure of weighted $(\w,\alpha)$-fair allocation problems and proposed a lower bound that permits one to better understand the problem's features. The $(\w,\alpha)$-fair allocation can be lower bounded individually and locally (that is, each user, or request, has a minimal guarantied allocation that depends on its individual weight and that of a locally reduced subset of users). 
 We compared experimentally our bound with the best bound available in the literature, and showed that we can provide consistent improvement in the case of high asymmetry of the capacity vector $\c$ (which may describe congested networks situations) or in the case of a suitable symmetry of the fairness measures (which may cover situations where the requests have  balanced relative priorities). We believe that the bound derived in the present paper for general fairness concepts ($\alpha >1$) can be further improved, and intend to soften its dependencies on the global minimum local midpoint value $p_{r_0}$. Our intuition suggests this would improve considerably the quality of our general bound.
To demonstrate the utility of our derivation, we showed as an illustration how the introduction of this lower bound can remarkably improve the performances of an iterative distributed algorithm, the FD-ADMM, that solves the problem optimally, by a simple initialization of a penalty parameter. We also observed that the initialization scheme allows a remarkably satisfactory tuning of the FD-ADMM, and that this accuracy may impoverish as the asymmetry of the weighted problem strengthens. In the future, we envision to study this situation and strengthen our bound in order to possibly empower the initialization scheme, providing more robustness to the technique with respect to asymmetry.

%
 \small
\bibliography{biblio}
\bibliographystyle{ACM-Reference-Format-Journals}

\end{document}